\myurl\url{https://dl.dropbox.com/u/50248089/Ozsim.py}
\myurlcuda\url{https://dl.dropbox.com/u/50248089/Cozsim.py}
\newtheorem{theorem}{Theorem}[section]
\newtheorem{prop}[theorem]{Proposition}
\title{A multi-lane traffic simulation model via continuous cellular automata}
\author{{\bf Emanuele Rodaro}\\ $ $\\ {\em Centro de
Matem\'{a}tica, Faculdade de Ci\^{e}ncias, Universidade do Porto,}\\ 
{\em R. Campo Alegre 687, 4169-007 Porto, Portugal}\\
{\em e-mail:} emanuele.rodaro@fc.up.pt\\$ $\\
{\bf  \"{O}znur Yeldan}\\ $ $\\
{\em Dipartimento di Matematica, Politecnico di Milano,}\\
{\em Piazza Leonardo da Vinci 32, 20133 Milano, Italy}\\
{\em e-mail:} ozyeldan@gmail.com
}
\date{\today}
\begin{document}
\maketitle

\begin{abstract}
Traffic models based on cellular automata have high computational efficiency because of their simplicity in describing unrealistic vehicular behavior and the versatility of cellular automata to be implemented on parallel processing. On the other hand, the other microscopic traffic models such as car-following models are computationally more expensive, but they have more realistic driver behaviors and detailed vehicle characteristics. We propose a new class between these two categories, defining a traffic model based on continuous cellular automata where we combine the efficiency of cellular automata models with the accuracy of the other microscopic models. More precisely, we introduce a stochastic cellular automata traffic model in which the space is not coarse-grain but continuous. The continuity also allows us to embed a multi-agent fuzzy system proposed to handle uncertainties in decision making on road traffic. Therefore, we simulate different driver behaviors and study the effect of various compositions of vehicles within the traffic stream from the macroscopic point of view. The experimental results show that our model is able to reproduce the typical traffic flow phenomena showing a variety of effects due to the heterogeneity of traffic.
\end{abstract}

\section{Introduction}
Traffic models are fundamental resources in the management of road network. A real progress in the study of traffic has obtained with the introduction of the models based on cellular automata (CA)\footnote{Throughout this paper, abbreviation CA refers to both cellular automata (plural) and cellular automaton (singular).}. A cellular automaton is a collection of cells (sites) on a grid of specified shape (lattice) that evolves through a number of discrete time steps according to a set of local rules based on the states of neighboring cells, for more details see \cite{kari}. CA models (CAMs) have the ability of being easily implemented for parallel computing because of their intrinsic synchronous behavior. A set of simple rules can be used to simulate a complex behavior, thus these models are conceptually simple (for the theory and applications of CA as models for complex systems in several scientific fields, see \cite{chaudhuri,chopard,farina-dennunzio}). The traffic models based on CA are capable of capturing micro-level dynamics and relating these to macro-level traffic flow behavior. However, they are not as accurate as the usual microscopic traffic models\footnote{We use the term ``usual microscopic traffic models" to refer to the microscopic traffic models other than CA, which are in general defined by means of a system of differential equations.} such as the time-continuous car-following ones \cite{brackstone-mcdonald}. A basic one-dimensional CAM for highway traffic flow was first introduced by Wolfram, where he gave an extensive classification of CAMs as mathematical models for self-organizing dynamic systems \cite{wolfram,wolfram2}. Nagel and Schreckenberg proposed the first nontrivial traffic model (the NaSch model) based on CA for single-lane highway in 1992 \cite{nasch}. This is a time- and space-discrete model where the traffic road is divided into cells of $7.5$ $m$ and it is defined on a one-dimensional array of a fixed number of sites with closed (periodic) boundary conditions. The velocity is expressed as the number of cells that a vehicle advances in one time step. The maximum velocity ($v_{max}$) is assumed to be $5$ $cells/sec$ ($135$ $km/h$) and every vehicle has the same target velocity $v_{max}$. Each update of the movements of the vehicles is determined by four consecutive rules that are performed in parallel to each vehicle at each second. In literature, the CA traffic models are updated similarly where they only differ on the randomization step introduced to add some randomness, for a survey see \cite{maerivoet-deMoor}. Note that in all of them the cells represent the space as it is in the NaSch model, so in the sequel we call them as ``NaSch-type" models \cite{knospe-santen-schad-schrec,nagel-wolf-wagner-simon,rick-nagel-sch-lat,wagner-nagel-wolf}.

In this paper, we propose a new traffic model via continuous cellular automata (CCA) which gets closer to the car-following models by introducing some continuity without losing the computational advantages typical of CAMs. Continuous cellular automata (or coupled map lattices) are cellular automata where the states of the cells are real values in $[0,1]$, and the local transition rule is a real function. They have have been introduced by Kaneko as simple models with the features of spatiotemporal chaos, and have applications in many different areas like fluid dynamics, biology, chemistry, etc. (for more details on couple map lattices, see \cite{kaneko,keller-kunzle-nowiki}). In this work we drop the restriction that the set of states is $[0,1]$, and we use the name CCA to consider CA where the set of state is in general an infinite set of the same cardinality of the continuum. Therefore, we consider a hybrid between the usual microscopic traffic models which are very accurate in predicting general traffic behavior but computationally expensive, and the CAMs which are very efficient due to their simplicity and intrinsic parallelism making them natural to be implemented for parallel computing. This process of passing from the typical coarse-granularity of CAMs to the continuity in space is done with a change of vision where we abandon the cell-space correspondence and assume that cells represent vehicles. Thus we consider a model with open boundary conditions where the number of cells is equal to the number of vehicles. In this way, we obtain the immediate advantage of having less cells to compute (just the number of vehicles). The continuity also gives us the possibility to refine the microscopic rules that govern the traffic dynamics, using fuzzy reasoning to mimic different real-world driver behaviors. All parameters of the decision process of the drivers are modeled individually by means of fuzzy subsets so that various types of drivers can be taken into consideration. Hence, we are able to study how the heterogeneity influences the traffic macroscopically. The CCA model proposed here is defined first for a single-lane road and then we extend it to the multi-lane case. The assumption that cells represent vehicles makes this extension not as natural as it is in the NaSch-type models.

\section{The single-lane model}\label{sec:ourCA}

In our CCA model, for the purpose of simulating different driver (vehicle) behaviors, we apply a multi-agent fuzzy system where the environmental information is transformed into the decision for the next time step action for each vehicle. Using this fuzzy logic-based system, we categorize the vehicles into types (kinds) where they have common characteristics such as the same perception of time or distance. In this way, we aim to study the effect of different composition of vehicles, i.e., the heterogeneity in road traffic. Let us now describe our model for a single-lane road. Note that for the sake of simplicity, we fix the unit of time as one second. Consider the one dimensional CCA  $\mathcal{SL}=(\mathbb{Z},\Sigma,\mathcal{N},\delta)$ where the lattice is the set of integers and the set of cell states $\Sigma = (K\times\mathbb{R}_{0}^{+}\times\mathbb{R}_{0}^{+}\times\mathbb{R}\times\left\{L,0,R\right\} \times \left\{L,0,R\right\})\cup\{\perp \}$. A cell with the state $\perp$ represents a cell without a vehicle. The generic $i$-th non-empty cell is in the state $\sigma_{i}(t)=(k_{i},x_{i}(t),v_{i}(t),s_{i}(t),d_{i}(t),d_{i}'(t))$ where
\begin{itemize}
    \item $k_{i}$ represents the kind consisting of all the information (parameters) specified differently for each kind of vehicle, such as: the maximum velocity $v_{max_{i}}$, the optimal velocity $v_{opt_{i}}$ (the velocity which each kind of vehicle feels comfortable in traffic stream), the length $l_i$, the fuzzy membership functions, the maximum stress $s_{max_{i}}$, the minimum stress $s_{min_{i}}$, the probability functions of lane-changing to the right lane $P_{R_{i}}(x)$ and to the left lane $P_{L_{i}}(x)$.
    \item $x_{i}(t)$ is the position, defined as the distance from the origin of the road to the middle point of this vehicle, $v_{i}(t)$ is the velocity, and $s_{i}(t)$ is the stress, a variable to keep track of how much the driver is above or below of his optimal velocity. The stress parameter is introduced to implement a more realistic driver behavior since we make the assumption that drivers usually tend to decelerate when they are moving with a velocity higher than their optimal velocity. We assume that if $s_i(t)<0$ then the driver is stressed and wants to go faster, instead if $s_i(t)\ge 0$ then his stress derives from the desire to go slower. Note that the main usage of $s_{i}(t)$ is related to the lane-changing process described in Section \ref{sec:multilane}.
    \item $d_{i}(t)$ is the variable describing the desire for lane-changing to the left ``$L$", to the right ``$R$" and staying on the same lane ``$0$". $d_{i}'(t)$ is the variable showing from which lane the $i$-th vehicle is transferred: from the left lane ``$L$", from the right lane ``$R$"  or not transferred ``$0$". Although these variables are used in the multi-lane model, the evaluations are done in the single-lane part.
\end{itemize}
$\mathcal{N}$ is a kind of one-dimensional extended Moore neighborhood defined by $\mathcal{N}(i)=(i-1,i,i+1,i+2)$, and $\delta :\Sigma^{4}\rightarrow\Sigma$ is the local transition function (local rule) acting upon a cell and its direct neighborhoods. The local rule is defined componentwise by $\delta(\sigma_{i-1}(t),\sigma_{i}(t),\sigma_{i+1}(t),\sigma_{i+2}(t))=(k_{i},x_{i}(t+1),v_{i}(t+1),s_{i}(t+1),d_{i}(t+1),0)$. The variables $s_{i}(t+1)$ and $d_{i}(t+1)$ are described in Section \ref{sec:multilane}. The space is updated as usual by $x_{i}(t+1)=x_{i}(t)+v_{i}(t+1)$ and the velocity by $v_{i}(t+1)=\min (v_{max_{i}},\Delta x_{i}^{+}(t),\max(0,v_{i}(t)+A_{i}(t)))$ where $A_i(t)$ is the acceleration calculated using the fuzzy decision modules and $\Delta x_{i}^{+}(t)$ is the distance with the front vehicle from front bumper to rear bumper defined in (\ref{eq:front distance}). This front distance constraint in the choice of the updated velocity is introduced to make the model collision-free as in the NaSch model. However, since it is unrealistic, it is applied in our model only in the borderline situations where extreme decelerations are involved. Indeed, we have tried to avoid this constraint as much as possible by introducing enough fuzzy rules to make the system more reactive to reduce these extreme situations. It is worth noting that in the case $A_i(t)=7.5$ $m/s^2$, we essentially obtain the open boundary, space-continuous and deterministic version of the NaSch model \cite{nasch}.

$A_{i}(t)$ is depending on the kind $k_{i}$, the velocity $v_{i}(t)$ and the variables defined as following.
\begin{enumerate}
\item Back Distance ($BD$): $\Delta x_{i}^{-}(t)=x_{i}(t)-x_{i-1}(t)-\frac{l_{i}}{2}-\frac{l_{i-1}}{2}$,
\item Front Distance ($FD$):
\begin{equation}\label{eq:front distance}
\Delta x_{i}^{+}(t)=x_{i+1}(t)-x_{i}(t)-\frac{l_{i+1}}{2}-\frac{l_{i}}{2},
\end{equation}
\item Next Front Distance ($NFD$): $\Delta x_{i,N}^{+}(t)=x_{i+2}(t)-x_{i}(t)-\frac{l_{i+2}}{2}-\frac{l_{i}}{2}$. This is introduced to have a better perception of the driver behavior of the next front vehicle since in reality drivers observe not only the vehicle just in front of them but also the vehicles ahead.
\item Perceived Front Collision Time ($PFCT$):
$$
\tau_{i,P}^{+}(t)=\left\{
  \begin{array}{ll}
    \zeta_{i}(t), & \quad \hbox{if}\;\tau_{i}^{+}(t)<0,\\
    \min(\zeta_{i}(t),\tau_{i}^{+}(t)), & \quad \hbox{otherwise.}
  \end{array}
\right.
$$
where $\zeta_{i}(t)$ is the parameter introduced for slowing down depending on the stress calculated by $\zeta_{i}(t)=\frac{s_{max_{i}}-s_{i}(t)}{v_{i}(t)}$ and $\tau_{i}^{+}(t)$ is Front Collision Time ($FCT$), the time that
passes for the $i$-th vehicle to reach to (to collide with) the front vehicle, defined as $\tau_{i}^{+}(t)=\frac{\Delta x_{i}^{+}(t)}{v_{i}(t)-v_{i+1}(t)}$. $PFCT$ is a parameter which is a combination between FCT and an auxiliary time defined to keep the velocity close to the optimal velocity using a parameter depending on the stress variable. If the FCT is strictly negative, then the driver perceives an artificial vehicle with the FCT $\zeta_{i}(t)$ to keep the vehicle close to its optimal velocity. Otherwise, $PFCT$ takes the smallest FCT value between $\zeta_{i}(t)$ and $\tau_{i}^{+}(t)$ to avoid collisions.
\item Worst Front Collision Time ($WFCT$): $\tau _{i,W}^{+}(t)=\frac{\Delta x_{i}^{+}(t)}{v_{i}(t)}$.
The collision time with the front vehicle in the case it suddenly stops. This parameter is introduced for safety reasons.
\item Next Front Collision Time ($NFCT$): $\tau_{i,N}^{+}(t)=\frac{\Delta x_{i,N}^{+}(t)}{v_{i}(t)-v_{i+2}(t)}$. The collision time with the next front vehicle. It is clear that a vehicle never reaches to its next front vehicle, but $NFCT$ is introduced to anticipate the braking maneuver of next front vehicle.
\item Back Collision Time ($BCT$): $\tau_{i}^{-}(t)=\frac{\Delta x_{i}^{-}(t)}{v_{i-1}(t)-v_{i}(t)}$. The collision time of the back vehicle. This is introduced to take into account the phenomenon where the back driver comes closer the front driver and forces him to accelerate. We call this situation as \textit{pushing effect}.
\end{enumerate}

The approach of embedding fuzzy logic while dealing with a system described by continuous variables has been already introduced, indeed, there are several works based on fuzzy logic systems in car-following models \cite{brackstone-mcdonald}. In our model, we use a fuzzy rule-based system by means of fuzzy IF-THEN rules \cite{mamdani} that determines the behavior of the vehicles in traffic stream. In this fuzzy system we introduce two fuzzy modules, formed by two sets of fuzzy rules, according to their significance. The first module has more importance since it is based on the information related to the front and the back vehicles while the second one has less importance since the rules in that module are related to the next front vehicle. As a consequence of having two different fuzzy modules, we have two outputs: $A_{i,1}$ is the output coming out from the first fuzzy module and $A_{i,2}$ is the output coming out from the second fuzzy module. The first and the seconds sets of fuzzy rules used in the simulator are summarized in Table \ref{tab:firstmodule1part} and Table \ref{tab:secondmodule}, respectively. After receiving the inputs $\tau_{i,P}^{+}(t)$, $\tau _{i,W}^{+}(t)$, $\tau _{i,N}^{+}(t)$, $\tau _{i}^{-}(t)$, $\Delta x_{i}^{+}(t)$, $\Delta x_{i,N}^{+}(t)$, $\Delta x_{i}^{-}(t)$, $v_{i}(t)$ determined by the environment, the final decision of the acceleration $A_{i}(t)$ is returned by the function (to avoid cumbersome notation, we omit the dependency of $t$):
\begin{equation}\label{func: evaluation of final decision}
F(A_{i,1},A_{i,2})=\left\{
\begin{array}{ll}
\min (A_{i,1},A_{i,2}) & \hbox{if}\;A_{i,1}\leq 0, \\
\frac{A_{i,1}+A_{i,2}}{2} & \hbox{if}\;A_{i,1}>0\wedge A_{i,2}\leq -0.25, \\
A_{i,1} & \hbox{otherwise.}
\end{array}
\right.
\end{equation}
where we give more weight to the decision taken by the first module. Indeed, in the simulation, we have noticed that without these kind of constraints the vehicles were tending to slow down too much. This is also the reason of splitting the fuzzy module into two parts. Moreover, the second module and consequently the second acceleration gains significance only in the case of emergencies such as a sudden breakdown or deceleration of the next front vehicle. The linguistic terms (properties) used for the acceleration outputs $A_i$ are: PB=Positive Big, PM=Positive Medium, PS=Positive Small, Z=Zero, NS=Negative Small, NM=Negative Medium, and NB=Negative Big.
\begin{table}[h]
\caption{First fuzzy set of rules}
\begin{center}
\scalebox{0.6}{
\begin{tabular}{ccc|c|c|c||c|l}
\cline{3-7}
\multicolumn{1}{c}{} & &
\multicolumn{4}{|c||}{Front Distance (FD)}
& \multicolumn{1}{c|}{Velocity ($v_{i}$)} \bigstrut \\ \cline{3-7}
\multicolumn{1}{c}{} & &
\multicolumn{1}{|c|}{BIG} & MEDIUM & SMALL & VERY SMALL & SMALL & \bigstrut \\ \cline{1-7}
\multicolumn{1}{|c|}{\multirow{4}{*}{PFCT}} &
\multicolumn{1}{|c|}{\multirow{2}{*}{BIG}} &
\multicolumn{1}{c|}{($v_{i}$ NOT SMALL)} & $(v_{i}$ NOT SMALL) & ZERO & ZERO & PB &  \bigstrut \\
\multicolumn{1}{|c|}{} & &
\multicolumn{1}{|c|}{PM} & PS &  &  & (Jam Situation) \bigstrut \\ \cline{2-7}
\multicolumn{1}{|c|}{} &
\multicolumn{1}{|c|}{MEDIUM} & ZERO & ZERO & NS & NS &
\multicolumn{1}{c}{} \bigstrut \\ \cline{2-6}
\multicolumn{1}{|c|}{} &
\multicolumn{1}{|c|}{SMALL} & NM & NM & NM & NM &
\multicolumn{1}{c}{} \bigstrut \\ \cline{2-6}
\multicolumn{1}{|c|}{} &
\multicolumn{1}{|c|}{VERY SMALL} & NB & NB & NB & NB &
\multicolumn{1}{c}{} \bigstrut \\ \cline{1-6}
\\ \cline{1-6}
\multicolumn{1}{|c|}{\multirow{1}{*}{WCT}} &
\multicolumn{1}{|c|}{SMALL} &
\multicolumn{1}{|c|}{} &
\multicolumn{1}{c|}{NS} & NM & NM &
\multicolumn{1}{c}{}  \bigstrut \\ \cline{1-6}
\end{tabular}
}
\end{center}
\label{tab:firstmodule1part}
\end{table}

In the first fuzzy set of rules we have also the ones related to the back vehicle, thus we consider the pushing effect where $BCT$ is ``Very Small" and $BD$ is ``Very Small". In this case, we apply the following rule formed by taking into consideration the safety with the front vehicle:
(IF $PFCT$ is $BIG$ AND $FD$ is $BIG$) OR (IF $PFCT$ is $BIG$ AND $FD$ is $MEDIUM$) OR (IF $PFCT$ is $MEDIUM$ AND $FD$ is $BIG$) OR (IF $PFCT$ is $MEDIUM$ AND $FD$ is $MEDIUM$) then the acceleration choice is $PS.$
\begin{table}[h]
\caption{Second fuzzy set of rules}
\begin{center}
\scalebox{0.6}{
    \begin{tabular}{cc | c | c | c | c |}
    \cline{3-6}
     & & \multicolumn{4}{c|}{Next Front Distance (NFD)} \bigstrut \\ \cline{3-6}
     & & VERY SMALL & SMALL & MEDIUM & BIG \bigstrut \\ \cline{1-6}
    \multicolumn{1}{|c|}{\multirow{4}{*}{NFCT}} &
    \multicolumn{1}{c|}{VERY SMALL} & NB & NB & NM & NM  \bigstrut  \\ \cline{2-6}
    \multicolumn{1}{|c|}{} &
    \multicolumn{1}{c|}{SMALL} & NM & NM & NS & NS  \bigstrut \\ \cline{2-6}
    \multicolumn{1}{|c|}{} &
    \multicolumn{1}{c|}{MEDIUM} & NS &
    \multicolumn{1}{c}{} &
    \multicolumn{1}{c}{} &
    \multicolumn{1}{c}{}    \bigstrut \\ \cline{2-3}
    \multicolumn{1}{|c|}{} &
    \multicolumn{1}{c|}{BIG} & NS &
    \multicolumn{1}{c}{} &
    \multicolumn{1}{c}{} &
    \multicolumn{1}{c}{}    \bigstrut \\ \cline{1-3}
   \end{tabular}
}
\end{center}
\label{tab:secondmodule}
\end{table}

These rules are based on some common sense of driver behaviors including some experiences. Although the rules are the same for each kind of vehicle, they have different ``weights" depending on the definition of the membership functions of different kinds. In this way, it is possible to give a description of a variety of behaviors, such as the behavior of a long vehicle driver or a driver with low reflexes. For instance, a person with low reflexes perceives a time of collision of $5$ seconds as a very short time, however a person with higher reflexes probably would feel comfortable with that time of collision.


After the fuzzy system receives all the inputs, we first determine the degree of membership of each input. Then by taking the minimum value of the images of the inputs (corresponding to the fuzzy logic AND operation) in each rule, we obtain the weights of the fuzzy rules. This evaluation is described in general terms as following.

Let $B_{1}^{j},...,B_{k_{j}}^{j}$ and $C^{j}$ be fuzzy subsets with the membership functions $\mu_{B_{1}^{j}},\ldots,\mu_{B_{k_{j}}^{j}}$ and $\mu_{C^{j}}$, respectively, and let $R^{j}$ be the fuzzy rules defined as:
$$
R^{j}:IF\;x_{1}^{j}\;\hbox{is}\;B_{1}^{j}\;AND\;\ldots\;AND\;x_{k_{j}}^{j}\;\hbox{is}\;B_{k_{j}}^{j}\;THEN \;y\;is\;C^{j}
$$
with
$$
\mu_{B_{1}^{j}\;}(x_{1}^{j})\wedge \;\ldots\; \wedge \mu_{B_{k_{j}}^{j}\;}(x_{k_{j}}^{j})
=\min \left\{\mu_{B_{1}^{j}\;}(x_{1}^{j}),\;\ldots\;,\mu_{B_{k_{j}}^{j}\;}(x_{k_{j}}^{j})\right\}
$$
for $1\leq j\leq m$, where $m$ is the number of fuzzy rules.

In our fuzzy system, there are also rules including the form: $x_r^{j}\hbox{ is not }B_r^{j}$ with the degree of membership $\mu_{\sim B_r^{j}\;}(x_r^{j})$ for some $r\in [1,k_j]$. These rules which are the first two rules of the first fuzzy decision module, are formed with the purpose of emphasizing the vehicle is not in a jam situation (see Table \ref{tab:firstmodule1part} where $v_i$ is not small). For an input $x_{r}^{j}$, the degree of membership of not having a property $B_r^{j}$ (``not being $B_r^{j}$") can be written as:
$$
\mu_{\sim B_r^{j}}(x_r^{j})=1-\mu_{B_r^{j}}(x_r^{j}).
$$

The output of the previous process so far is a fuzzy set, so we should convert our fuzzy output set into one single number as the output of the fuzzy system, which is the ``acceleration" in our case. Recall that since we have two fuzzy modules, there are two outputs of the system: $A_{i,1}$ and $A_{i,2}$ which are used to evaluate $A_{i}(t)$ at time $t$ by the function $F$, see Equation \ref{func: evaluation of final decision}. The conversion process is called as defuzzification and there are many defuzzification techniques to obtain a crisp value, such as the center-of-gravity (COG) and the weighted average formula (WAF) method. In our model, since we would like to use a simple method that does not require too much computational power, we define a new method and call it as generalized weighted average formula (GWAF), i.e., we generalize the WAF to the case where the membership functions are not necessarily symmetric. We describe it now in general terms. Suppose that each $j$-th rule receives the values $\overline{x}_{1}^{j},...,\overline{x}_{k_{j}}^{j}$ as inputs. Let
$$
w^{j}=\min\left\{ \mu _{B_{1}^{j}\;}(\overline{x}_{1}^{j}),\;...\;,\mu _{B_{k_{j}}^{j}\;}(\overline{x}_{k_{j}}^{j})\right\}
$$
be the weights for each rule $j$ and let $P^{j}=\mu _{C^{j}}^{-1}(w^{j})$ be the preimage of the weight of $j$-th rule. We assume that $\mu _{C^{j}}$ does not have any plateau, since we do not want $P^{j}$ to contain intervals (this is done to have a discrete set). The defuzzified output is thus calculated by:
$$
\overline{y}=\frac{\sum_{j=1}^{m}w^{j}\sum_{z\in
P^{j}}z}{\sum_{j=1}^{m}\left\vert
P^{j}\right\vert w^{j}}.
$$

\section{The multi-lane model}\label{sec:multilane}

In this section, we extend the single-lane model to the multi-lane case. This extension is not trivial as it is in the NaSch-type models where adding a lane simply means adding an array of cells and where the local transition function can naturally be extended. This is a consequence of having a clear physical interpretation of the model given by the fact that space is represented by cells. However, as a consequence of not having the cell-space correspondence, first of all we do not have the natural order which makes the extension of the local transition function so easy to achieve, and secondly not all the configurations $c:\mathbb{Z}\rightarrow \Sigma$ of our model represent a physical situation. For this reason, we consider just the configurations that have physical meaning, i.e., two adjacent cells $n,n+1$ are in the states where the positions fulfill the constraint $x_n(t)<x_{n+1}(t)$. We denote the set consisting of such configurations by $\mathcal{C}onf_{p}(\mathcal{SL})$. The \emph{support} of a configuration $c\in \mathcal{C}onf_{p}(\mathcal{SL})$ is a maximal interval $[i,j]$ of integers with $i\le j$ such that for all $k<i$ and $k>j$, $c(k)=\perp$.

The union of two arrays of cells representing a road with two lanes is a natural candidate for a multi-lane model. Therefore, we first present our multi-lane model as a union of interacting single-lane CCA where the interaction is given by a transfer operation, and then we prove in Prop. \ref{prop} that this model can actually be simulated by a CCA.

The process of lane-changing behavior consists of several steps. First the driver feels the desire of lane-changing, then if the conditions of the place in the lane where the driver wants to move respect some safety constraints, the driver can finally perform the maneuver. Regarding the decision of performing a lane-changing, we first check if the $i$-th vehicle with the state $\sigma_i(t)=(k_i,x_i(t),v_i(t),s_i(t),d_i(t),d_i'(t))$ desires to change lanes, i.e., $d_i(t)\neq0$. The desire of lane-changing is depending on the stress variable $s_i(t)$. Having positive stress means that the driver is above of his optimal velocity, thus he tends to change lane to the right (slower lane). Otherwise, the driver is nervous and he desires to go faster, thus he tends to change lane to the left (faster lane). The stress cannot be increased (or decreased) arbitrarily, so we define the two parameters $s_{min_{i}}$ and $s_{max_{i}}$ which are the maximum amounts of negative and positive stress that a driver can tolerate. Thus if the updated stress $s_i(t+1)$ exceeds the boundaries $s_{min_{i}}$ and $s_{max_{i}}$ we simply assign it to these limit values. The stress is updated by
\begin{numcases}{s_i(t+1)=}
s_{acc}(t)/2 & if $\tau_{i}^{+}(t)<0$ and $\frac{s_{min_{i}}}{2}<s_{acc}(t)<0$,\label{eq:stress1} \\
s_{acc}(t)\cdot(1+\Phi) & if $\tau _{i}^{+}(t)\ge 0$ and $\frac{s_{min_{i}}}{2}<s_{acc}(t)<0$,\label{eq:stress2} \\
\nonumber s_{acc}(t) & otherwise.
\end{numcases}
\noindent where the accumulated stress is defined by $s_{acc}(t)=s_{i}(t)+(v_{i}(t+1)-v_{opt_{i}})\cdot\mathbb{X}(t)$ and $\mathbb{X}(t)$ is a random variable distributed uniformly, $\mathbb{X}(t)\sim\mathcal{U}(0,1)$. Condition (\ref{eq:stress1}) is used to avoid frequent lane-changing in the case of a jam situation, if the queue is moving. We also embed the strategy of trying to change lanes instead of braking in the case the front vehicle is close and tends to brake. We model this effect with the condition (\ref{eq:stress2}) considering a factor $\Phi$ representing how much the $i$-th vehicle is in the situation where the front vehicle is close and going much more slower than the $i$-th vehicle. $\Phi$ is calculated by the membership value of $(\tau _{i}^{+}(t) \mbox{ is } VERY\;SMALL \wedge \Delta x_{i}^{+}(t) \mbox{ is } MEDIUM) \vee (\tau _{i}^{+}(t) \mbox{ is } VERY\;SMALL \wedge \Delta x_{i}^{+}(t) \mbox{ is } SMALL) \vee (\tau _{i}^{+}(t) \mbox{ is } SMALL \wedge \Delta x_{i}^{+}(t) \mbox{ is } MEDIUM) \vee (\tau _{i}^{+}(t) \mbox{ is } SMALL \wedge \Delta x_{i}^{+}(t) \mbox{ is }\\ SMALL)$. It is used to increase the amount of stress and consequently to have more probability of changing lanes.
\begin{algorithm}[h]
\caption{The pseudo-code for evaluating $Eval_{(\mathcal{L},\mathcal{R})}$.}
\label{alg:Eval}
{\fontsize{8.7}{8.7}\selectfont
\begin{algorithmic}[1]
\Procedure{$Eval_{(\mathcal{L},\mathcal{R})}$}{$k_{i},v_{i}(t),s_{i}(t)$}
\If {$s_{i}(t)\ge 0$}
    \State $ns_i(t) = s_i(t)/s_{max_{i}}$
    \State \textbf{Execute the Bernoulli Trial} $\mathbb{X}\sim \mathcal{B}(2,P_{R_{i}}(ns_i(t))$
        \If {$\mathbb{X} = 1$}
            \State $d_i(t+1)= R$
        \Else
            \State $d_i(t+1)= 0$
        \EndIf
\Else
    \State $ns_i(t) = s_i(t)/s_{min_{i}}$
    \State \textbf{Execute the Bernoulli Trial} $\mathbb{Y}\sim \mathcal{B}(2,P_{L_{i}}(ns_i(t))$
    \If {$\mathbb{Y} = 1$}                   
        \State \textbf{Execute the Bernoulli Trial} $\mathbb{Z}\sim \mathcal{B}(2,\mu_{vels}(v_i(t)))$  
            \If {$\mathbb{Z} = 1$}          
                \If {$\mathcal{R} = 0$}     
                    \State $d_i(t+1)= L $   
                \EndIf
                \If {$\mathcal{L} = 0$}     
                    \State $d_i(t+1)= R $   
                \EndIf
                    \State \textbf{Execute the Bernoulli Trial} $\mathbb{W}\sim \mathcal{B}(2,0.7)$  
                    \If {$\mathbb{W} = 1$}  
                        \State $d_i(t+1)= L$ 
                    \Else
                        \State $d_i(t+1)= R$ 
                    \EndIf
                \Else
                    \State $d_i(t+1)= L$
                \EndIf
    \Else
        \State $d_i(t+1)=0$
    \EndIf
\EndIf
\EndProcedure
\end{algorithmic}
}
\end{algorithm}

We now describe the stochastic process that evaluates the desire of lane-changing for the next time step $d_{i}(t+1)=Eval_{(\mathcal{L},\mathcal{R})}(k_{i},v_{i}(t),s_{i}(t))$ described in Algorithm \ref{alg:Eval}. The decision of this action is made by means of a Bernoulli process $\mathcal{B}(2,p)$. The probabilities of changing lane to the left and to the right, which are in general different, are calculated by two functions $P_{L_{i}}(x): [0,1]\rightarrow [0,1]$ and $P_{R_{i}}(x): [0,1]\rightarrow [0,1]$ contained in the kind $k_{i}$. The reason is that some kinds of vehicles tend to move more to the left lane while some others tend to move more to the right. For instance, long vehicles prefer to go to the right lane more than left, i.e., they tend to stay more on the right lane. The variable used to calculate such probabilities is the normalized stress $ns_i(t)=\max(s_{i}(t)/s_{min_{i}},s_{i}(t)/s_{max_{i}})$. Thus if $s_{i}(t)\ge 0$ then the choice to stay or go to the right is done by $\mathcal{B}(2,P_{R_{i}}(ns_i(t)))$, otherwise by $\mathcal{B}(2,P_{L_{i}}(ns_i(t)))$ to go to the left. If the latter process is resulted as going to the left, we take into consideration also the jam situation where the drivers randomly move to the left or right to find an emptier lane. For this purpose we use the velocity parameter and we need the fuzzy set of ``Velocity SMALL" with the membership function denoted by $\mu_{vels}$ to perform this evaluation which is done by the Bernoulli process $\mathcal{B}(2,\mu_{vels}(v_i(t)))$. If the result of the outcome is positive, meaning that the traffic is indeed jammed, the decision for lane-changing depends on the relative position of the lane, i.e., if the vehicle is in the most left (right) lane then it can only move to the right (left). For this reason, the single-lane model $\mathcal{SL}_{(\mathcal{L},\mathcal{R})}$ depends on two parameters $\mathcal{L},\mathcal{R}$ which are Boolean variables used to describe whether or not there exists a lane on the left or on the right, respectively. Thus if there is no lane on the left (right) of the driver, i.e., $\mathcal{L}=0$ ($\mathcal{R}=0$), he can only move to the right (left) lane. If there are lanes both on the left and right, the choice is obtained using the Bernoulli process $\mathcal{B}(2,0.7)$ where the decision is ``moving to the left lane" with the probability of $0.7$ and ``moving to the right lane" otherwise. These probabilities are decided to be different since we make the assumption that the drivers desiring to go faster usually tend to move to the left lane more than the right even in a situation of traffic congestion.

After the desire of lane-changing, we perform the transfer of a vehicle. The transfer clearly changes the configuration of the single-lane CCA, thus we need to introduce some operations to describe this process. Given a vehicle represented by a state $\sigma=(k,x,v,s,d,d') \in \Sigma\setminus \{\perp\}$ and a configuration $c\in \mathcal{C}onf_{p}(\mathcal{SL})$, the \textit{inserting operator} at position $n\in \mathbb{Z}$ is the function $Ins_{n}:\Sigma\setminus \{\perp\}\times \mathcal{C}onf_{p}(\mathcal{SL})\rightarrow \mathcal{C}onf_{p}(\mathcal{SL})$ that changes the configuration from $c$ into $Ins_{n}(\sigma,c)$ by shifting the states of all cells one step starting from the $n$-th position and setting the state of the $n$-th cell to the value $\sigma$. The right-inverse of $Ins_{n}$ is the \textit{deleting operator} at position $n$ denoted by $Del_{n}:\mathcal{C}onf_{p}(\mathcal{SL})\rightarrow \mathcal{C}onf_{p}(\mathcal{SL})$. The index $n$ gives the position where we insert or delete the content of a cell. The function $Ins_{n}(\sigma,c)$ is well defined only if the insertion of the vehicle still generates a physical configuration. However, before the insertion, we check what is the relative position of the vehicle that wants to perform a lane-changing with respect to the lane that it is going to enter. For this reason, we need to define an \textit{index operator}  $Indx:\Sigma \setminus \{\perp\} \times \mathcal{C}onf_{p}(\mathcal{SL})\rightarrow \mathbb{Z}$ such that $Indx(\sigma,c)$ represents the index where the vehicle with state $\sigma$ would go if we would try to insert it into the configuration $c$.

In our model, the process of lane-changing is based on some safety criterions which check the possibility of executing a lane-changing by considering the situation in the target lane. These criterions, obtained from some simulation experiments, guarantee that after the lane-changing there will be no danger (avoidance of collision) with the back and the following vehicle on the target lane. Consider the vehicle represented by $\sigma$ that wants to enter in a lane with configuration $c$, thus the corresponding position in the target lane is between the $j-1$-th and $j$-th vehicle with velocities $v_{j-1}$ and $v_{j}$ where $j=Indx(\sigma,c)$. Suppose that the back and front distances of this entering vehicle with respect to the $j-1$-th and $j$-th vehicle are denoted by $\Delta^{-}$ and $\Delta^{+}$, respectively, then the safety constraints are:
\begin{equation}
\Delta^{-}>\max(0,v_{j-1}^{1.2}-v+\vert v_{j-1}-v\vert +3),\;\Delta^{+} >\max(0,v^{1.25}-v_{j}+3)  \label{eq1:constraint on velocity}
\end{equation}
The condition for performing a lane-changing to the left (right) lane can be summarized by the Boolean variable $trans_L(\sigma,c)$ ($trans_R(\sigma,c)$) which consists of the safety criterion (\ref{eq1:constraint on velocity}), the desire to change lane to the left (right), i.e., $d = L$ ($d = R$) and not being already transferred, i.e., $d'=0$ (to avoid multiple transfers in one step of simulation). In the process of lane-changing, we also need to keep trace where the vehicle comes from because the transfer of a vehicle is seen as a process consisting of copying and erasing steps. For this purpose we define the copy of $\sigma$ as:
$$
\sigma^{cp}=\left\{
\begin{array}{lr}
(k,x,v,s/5,L,R) & \;\hbox{if }\sigma=(k,x,v,s,L,0), \\
(k,x,v,s/5,R,L) & \;\hbox{if }\sigma=(k,x,v,s,R,0).
\end{array}
\right.
$$
where we decrease the stress from $s$ to $s/5$, firstly because we make the assumption that when a vehicle changes lanes there is a sense of satisfaction reducing the stress, and secondly this change helps to reduce the ping-pong phenomenon \cite{knospe-santen-schad-schrec,rick-nagel-sch-lat} of vehicles continuing to move from one lane to another in case of traffic congestion. If a vehicle with state $\sigma$ desires to go to the left (right) lane, then we copy $\sigma$ into $c$, i.e., we update $c$ into a new configuration denoted by $\sigma \rightarrowtail_L c$ ($\sigma \rightarrowtail_R c$)$\in \mathcal{C}onf_{p}(\mathcal{SL})$ defined by the following equation ($X\in\{L,R\}$):
$$
\sigma \rightarrowtail_X c=\left\{
\begin{array}{ll}
Ins_{j}(\sigma^{cp},c)  & \;\hbox{if }trans_X(\sigma ,c), \mbox{ where }j=Indx(\sigma,c)\\
c & \;\hbox{otherwise.}
\end{array}
\right.
$$
Besides, we need an operator to update a configuration $c'$ in which we have to erase the state that has already been copied into $c$. The erasing procedure is accomplished by changing $c'$ into the new configuration $c'\setminus \omega \in \mathcal{C}onf_{p}(\mathcal{SL})$ defined as:
$$
c'\setminus \omega =\left\{
\begin{array}{ll}
Del_{i}(c') &
\begin{array}{r}
\;\;\hbox{if }(c'(i)=(h,y,w,5r,R,0)\wedge \omega=(h,y,w,r,R,L)) \\
\vee(c'(i)=(h,y,w,5r,L,0)\wedge \omega=(h,y,w,r,L,R)),\\
\mbox{where }i=Indx(\omega,c')
\end{array}
\\
c' & \;\;\;\hbox{otherwise.}
\end{array}
\right.
$$

\begin{algorithm}[h]
\caption{The pseudo-code for one time step evolution of the multi-lane model}
\label{alg:evolve multi}
{\fontsize{8.7}{8.7}\selectfont
\begin{algorithmic}[1]
\Procedure{Update}{$c_0,\ldots, c_{M-1}$}
\For{$i = 0 \to M-1$}
    \If{$i=0$}
    \State $c_1 := (c_0\rightarrowtail_R c_1)$
    \State $c_0 := c_0\setminus (c_0\rightarrowtail_R c_1)$
    \EndIf
    \If{$0<i<M-1$}
    \State $c_{i-1} := (c_i\rightarrowtail_L c_{i-1})$
    \State $c_i := c_i\setminus (c_i\rightarrowtail_L c_{i-1})$
    \State $c_{i+1} := (c_i\rightarrowtail_R c_{i+1})$
    \State $c_i := c_i\setminus (c_i\rightarrowtail_R c_{i+1})$
    \If{$i=1$}
    \State $c_{0} :=  \delta_{(0,1)}^*(c_0)$
    \Else
    \State $c_{i-1} := \delta_{(1,1)}^*(c_{i-1})$
    \EndIf
    \EndIf
    \If{$i=M-1$}
    \State $c_{M-2} := (c_{M-1}\rightarrowtail_L c_{M-2})$
    \State $c_{M-1} := c_{M-1}\setminus (c_{M-1}\rightarrowtail_L c_{M-2})$
    \State $c_{M-2} :=  \delta_{(1,1)}^*(c_{M-2})$
    \State $c_{M-1} :=  \delta_{(1,0)}^*(c_{M-1})$
    \EndIf
\EndFor
\EndProcedure
\end{algorithmic}
}
\end{algorithm}
We extend these functions to the binary operations $\rightarrowtail_L,\rightarrowtail_R$ and $\setminus$ on $\mathcal{C}onf_{p}(\mathcal{SL})$ to describe the transfer process of all the vehicles from one lane to another. We call $\rightarrowtail_L$ and $\rightarrowtail_R$ as the \emph{left} and \emph{right copying} operation, respectively, and $\setminus$ as the \emph{erasing} operation. The extension is made in the following way. Let $c,c'\in \mathcal{C}onf_{p}(\mathcal{SL})$ be two configurations with $c(i)=\omega_{i}\neq \perp$ for $i=m,\ldots, M$ and $c'(i)=\sigma_{i}\neq \perp$ for $i=n,\ldots, N$, where $[m,M],[n,N]$ are the supports of $c,c'$, respectively. We define the configurations $c'\rightarrowtail_L c$ inductively as follows: let $e_{0}=c$ and $e_{k}:=\sigma_{k}\rightarrowtail_L e_{k-1}$ for $k = 1,\ldots, N-n+1$ then we define $c'\rightarrowtail_L c:=e_{N-n+1}$ (analogously for $\rightarrowtail_R$). On the other hand, for $c'\setminus c$, let $g_{0}=c'$ and $g_{k}:=g_{k-1}\setminus \omega _{j_{k}}$ for $k = 1,\ldots, M-m+1$ then $c'\setminus c:=g_{M-m+1}.$ Using these operators it is easy to see that the process of the transfer of vehicles from $c'$ to $c$, where $c$ is the configuration of a lane on the left of the lane with configuration $c'$, can be easily described by transforming $c$ into $c'\rightarrowtail_L c$ and $c'$ into $c'\setminus (c'\rightarrowtail_L c)$.

We now present our multi-lane model using these copying and erasing operations. Suppose that there are $M\ge 2$ lanes on a highway. We model each lane using the single-lane CCA model $\mathcal{SL}$. Recall that $\mathcal{SL}$ depends on the parameters $\mathcal{L},\mathcal{R}$, thus we rewrite it as $\mathcal{SL}_{(\mathcal{L},\mathcal{R})}=(\mathbb{Z},\Sigma,\mathcal{N},\delta_{(\mathcal{L},\mathcal{R})})$. In this way, we can associate to the road the ordered $M$-tuple: 
$$
(\mathcal{SL}_{(0,1)},\mathcal{SL}^1_{(1,1)},\ldots,\mathcal{SL}^{M-2}_{(1,1)},\mathcal{SL}_{(1,0)})
$$
where if $M\ge 3$ we have $M-2$ copies of $\mathcal{SL}_{(1,1)}$, and where $\mathcal{SL}_{(0,1)}$ and $\mathcal{SL}_{(1,0)}$ represent the left- and the right-most lane, respectively. In the case $M=2$, we consider just the pair $(\mathcal{SL}_{(0,1)},\mathcal{SL}_{(1,0)})$. Suppose that these $M$ number of CCA are in the configurations $(c_0,\ldots, c_{M-1})\in\mathcal{C}onf_p(\mathcal{SL})^M$. In our multi-lane model, we scan each lane and we transfer the vehicles to the adjacent lanes. After this process, for each lane we apply the single-lane CCA model to update the configuration and this update is done by means of the global transition function of $\mathcal{SL}_{(a,b)}$ denoted by $\delta_{(a,b)}^*,\;a,b\in \{0,1\}$. In this way, we have a new array of configurations $(c_0',\ldots, c_{M-1}')$, and this process represents $1$ $sec.$ of simulation. Moreover, the order with which the transfer is performed is from the left-most lane to the right-most one, and this is done to satisfy the precedence requirement in European roads. In Algorithm \ref{alg:evolve multi}, it is described the updating process $Update: (c_0,\ldots, c_{M-1})\mapsto(c_0',\ldots, c_{M-1}')$. We now show that Algorithm \ref{alg:evolve multi} can be simulated by a CCA which implies that our multi-lane model is framed as a CCA model.
\begin{prop}\label{prop}
    There is a stochastic CCA multi-lane model which simulates Algorithm \ref{alg:evolve multi}.
\end{prop}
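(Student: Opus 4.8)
The plan is to realize the entire update map $Update$ of Algorithm \ref{alg:evolve multi} as the global transition of a single stochastic CCA, at the price of letting one call of $Update$ correspond to several steps of the simulating automaton. I would take the lattice to be the product $\mathbb{Z}\times\{0,\ldots,M-1\}$, i.e.\ $M$ disjoint copies of the single-lane lattice, so that the row $\mathbb{Z}\times\{\ell\}$ carries the configuration $c_\ell$. The state set is enlarged to $\widehat{\Sigma}=\Sigma\times\Gamma$, where $\Gamma$ is a finite set of control symbols recording the current phase of the algorithm (which lane is being scanned, whether we are copying, erasing, or updating, and whether a shift signal is passing through the cell); the transfer-bookkeeping variables $d,d'$ are already present inside $\Sigma$. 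The neighbourhood $\widehat{\mathcal N}$ of a cell $(n,\ell)$ consists of a bounded window $\{n-1,n,n+1,n+2\}$ inside its own row (to reproduce the original $\mathcal N$) together with the corresponding windows in the two adjacent rows $\ell-1,\ell+1$. A global state of this CCA thus encodes $(c_0,\ldots,c_{M-1})$ together with a phase counter.

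Next I would check that each primitive of $Update$ is locally computable. The stochastic desire $Eval_{(\mathcal{L},\mathcal{R})}$ reads only $k_i,v_i,s_i$ of a single cell and performs finitely many Bernoulli trials, so it is a stochastic function of the cell alone; this is the source of the ``stochastic'' qualifier in the statement. The safety predicates $trans_L,trans_R$ and the index $Indx(\sigma,c)$ are determined by comparing the position $x$ of the candidate vehicle with the positions of its prospective neighbours in the target row: since positions are strictly increasing along a physical configuration in $\mathcal{C}onf_{p}(\mathcal{SL})$, the correct insertion index is the unique place where monotonicity would be restored, and this place is detectable by comparing two adjacent cells. Hence $trans_X$ and the decision ``a vehicle is to be inserted here'' are local predicates, and the single-lane rule $\delta_{(a,b)}^*$ is local by construction.

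The genuinely non-local ingredient is the reindexing built into $Ins_n$ and $Del_n$: a single insertion or deletion shifts every cell of the support by one index. I would implement this by a shift signal encoded in $\Gamma$. When a cell detects that a copy must be inserted at its position, it injects into the target row a marker that travels one cell per step toward the end of the support, each cell handing its content to its neighbour as the marker passes; deletion is the mirror image. Because the open boundary conditions force every configuration in $\mathcal{C}onf_{p}(\mathcal{SL})$ to have finite support, each such signal terminates after finitely many CCA steps. The left-to-right scan of Algorithm \ref{alg:evolve multi}, together with the fact that $\delta_{(a,b)}^*$ is applied to $c_{i-1}$ only after it has received both its right-transfers from $c_{i-2}$ and its left-transfers from $c_i$, is serialized by a token moving through the lane coordinate and advancing the phase counter; one full cycle of the counter reproduces exactly one call of $Update$.

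The main obstacle is precisely this reindexing: a CCA rule must depend on a fixed finite neighbourhood, whereas $Ins_n$ and $Del_n$ move unboundedly many cells, so the equality of the two dynamics can hold only in the many-steps-to-one sense, and one must argue termination from finiteness of support. A secondary but delicate point is correctness when several vehicles are inserted into, or erased from, the same lane in one pass: the markers must not collide or overtake, and the erasing operation must match the copied cell to exactly the right source cell. Here I would exploit the stress-tagging convention already built into $\sigma^{cp}$ and $c'\setminus\omega$ --- the pairing of a state carrying stress $5r$ with its copy carrying stress $r$ and swapped $(d,d')$ flags --- to let each erase step identify its target cell purely from local data, and I would order the shift signals so that simultaneous insertions are processed as a monotone left-to-right sweep, which prevents overtaking and keeps each intermediate configuration in $\mathcal{C}onf_{p}(\mathcal{SL})$.
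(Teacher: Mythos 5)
Your overall strategy --- a fine-grained CCA on $\mathbb{Z}\times\{0,\ldots,M-1\}$ whose cells are still vehicle-slots, with a finite control alphabet and travelling shift signals implementing $Ins_n$ and $Del_n$ over many steps --- is genuinely different from the paper's. The paper avoids every locality issue you wrestle with by brute force: it takes the lattice to be $\mathbb{Z}$ with one cell \emph{per lane}, and puts an entire single-lane configuration into the state of a cell, $\Omega=(\mathcal{C}onf_p(\mathcal{SL})\times\{copy,erase\}\times\mathbb{N}^3)\cup\{\perp\}$, with the von Neumann neighbourhood $(i-1,i,i+1)$. The operations $\rightarrowtail_L$, $\rightarrowtail_R$, $\setminus$ and the global single-lane map $\delta^*_{(a,b)}$ are then applied wholesale inside the local rule (Algorithm \ref{alg:Delta eval}), the left-to-right scan is serialized by the counters $K_0,P_0$ and the flag $X_0$, and exactly $2M$ applications of $\Delta^*$ reproduce one call of $Update$. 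Since the definition of CCA used here places no bound on the state set beyond its cardinality, this coarse encoding is legitimate and the proposition follows essentially by inspection of the two algorithms.

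Your construction, by contrast, has a genuine gap at the inter-lane interaction. In this model the index of a cell within a lane is only an enumeration of that lane's vehicles ordered by position; indices in adjacent lanes have no spatial correspondence. Hence the insertion index $j=Indx(\sigma,c)$ of a vehicle of lane $\ell$ into lane $\ell\pm 1$ can be arbitrarily far, in index, from the source cell, and the data needed to evaluate $trans_L$ and $trans_R$ --- the gaps $\Delta^-$, $\Delta^+$ and the velocities $v_{j-1}$, $v_j$ of the prospective neighbours in the target lane --- is not contained in any fixed index-window of the adjacent rows. Your neighbourhood ``the corresponding windows in the two adjacent rows'' therefore does not see the relevant cells, and the claim that the insertion place ``is detectable by comparing two adjacent cells'' only holds after the candidate's state has been transported to those cells; that transport is itself an unbounded-distance communication requiring further signals that you do not describe. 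A second, related gap: your phase token must not advance to the next lane until all shift (and search) signals in the affected rows have terminated, and ``all signals have died out'' is not a locally checkable predicate --- you need an explicit synchronization sweep over the finite but unbounded supports. Both points are repairable with standard signal constructions, but as written the argument does not go through; the paper's coarse one-cell-per-lane encoding is precisely the device that makes these difficulties disappear.
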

\begin{proof}
  We define the stochastic CCA:
    $$
    \mathcal{ML}=(\mathbb{Z}, \Omega, \mathcal{M}, \Delta)
    $$
    where
    \begin{itemize}
        \item $\Omega = (\mathcal{C}onf_p(\mathcal{SL})\times \{copy, erase\}\times \mathbb{N}^3)\cup\{\perp\}$, where $\perp$ is the state associated to the empty cells representing cells with no lane.
        \item $\mathcal{M}(i)=(i-1,i,i+1)$ is the von Neumann neighborhood.
        \item $\Delta:\Omega^3\rightarrow\Omega$ is the local transition function such that
            $$
                \Delta(\omega_{-1},\omega_0,\omega_1)=\omega_0'
            $$
        defined in Algorithm \ref{alg:Delta eval} with inputs
            $$
                \omega_{j}=(c_j,X_j,M_j,P_j,K_j),\;\;j=-1,0,1.
            $$
    \end{itemize}
If we consider $M$ lanes with the configurations $c_0,\ldots,c_{M-1}$, we associate to $\mathcal{ML}=(\mathbb{Z}, \Omega, \mathcal{M}, \Delta)$ the configuration
$$
\mathcal{C}=(\omega_0,\ldots,\omega_{M-1})
$$
where $\omega_i=(c_i,copy,M,i,0)$ for $i=0,\ldots,M-1$. It is easy to see that applying $2M$ times the global transition function $\Delta^*$ to $\mathcal{C}$, we obtain a new configuration
$$
\Delta^{*^{2M}}(\mathcal{C})=(\omega_0',\ldots,\omega_{M-1}')
$$
with $\omega_i'=(c'_i,copy,M,i,0)$ for $i=0,\ldots,M-1$ where
$$
(c'_0,\ldots,c'_{M-1})=Update(c_0,\ldots,c_{M-1}),
$$
and $Update(c_0,\ldots,c_{M-1})$ is the function defined in Algorithm \ref{alg:evolve multi}.
\end{proof}
        \begin{algorithm}[H]
\caption{The pseudo-code to compute the local transition function $\Delta$.}
\label{alg:Delta eval}
{\fontsize{9}{9}\selectfont
\begin{algorithmic}[1]
\Procedure{$\Delta\;$}{$\omega_{-1}$, $\omega_0$, $\omega_1$}
    \If{$\omega_{0}=\perp$}
    \State $\omega_0'=\perp$
    \Else
       \If{$K_0 = 0$}
           \If{$X_0 = copy$}
               \If{$P_0 = K_0 + 1$}
                    \State {$c_{0}:= (c_{-1}\rightarrowtail_R c_0)$}
               \EndIf
               \State {$X_0 := erase$}
            \Else
               \If{$P_0 = K_0$}
                    \State {$c_{0}:= (c_{0}\setminus c_1)$}
               \EndIf
                    \State {$K_0:= K_0 + 1\; mod\; M_0$}
                    \State {$X_0 := copy$}
                    \State {\textbf{exit}}
            \EndIf
       \EndIf
        \If{$0<K_0<M_0 - 1$}
            \If{$X_0 = copy$}
                \If{$P_0 = K_0 - 1$}
                    \State {$c_{0}:= (c_1\rightarrowtail_L c_0)$}
                \EndIf
                \If{$P_0 = K_0 + 1$}
                    \State {$c_{0}:= (c_{-1}\rightarrowtail_R c_0)$}
                \EndIf
                \State {$X_0 := erase$}
            \Else
                \If{$P_0 = K_0$}
                    \State {$c_{0}:= (c_{0}\setminus c_{-1})$}
                    \State {$c_{0}:= (c_{0}\setminus c_1)$}
                \EndIf
                    \State {$K_0:= K_0 + 1\; mod\; M_0$}
                    \State {$X_0 := copy$}
                    \State {\textbf{exit}}
            \EndIf
        \EndIf
        \If{$K_0 = M_0 - 1$}
            \If{$X_0 = copy$}
                \If{$P_0 = K_0 - 1$}
                    \State {$c_{0}:= (c_{1}\rightarrowtail_L c_0)$}
                \EndIf
                \State {$X_0 := erase$}
            \Else
                \If{$P_0 = K_0$}
                    \State {$c_{0}:= (c_{0}\setminus c_{-1})$}
                \EndIf
            \If{$P_0=0$}
                \State {$c_{0}:= \delta^*_{(0,1)}(c_{0})$}
            \EndIf
            \If {$0<P_0<M_0-1$}
                \State {$c_{0}:= \delta^*_{(1,1)}(c_{0})$}
            \EndIf
            \If {$P_0=M_0-1$}
                \State {$c_{0}:= \delta^*_{(1,0)}(c_{0})$}
            \EndIf
            \State {$K_0:= K_0 + 1\; mod\; M_0$}
            \State {$X_0 := copy$}
            \EndIf
        \EndIf
\EndIf
\EndProcedure
\end{algorithmic}
}
\end{algorithm}

\section{The experiments and the simulation results}
The model presented here is a CAM which is intrinsically parallel. We first implement the model using the programming language Python since it is a high level language making the implementation of the model faster and easier. Indeed, during the phase of the development of writing the code, we use an object-oriented philosophy, especially while passing from the single-lane to multi-lane case. Thus, it is easier to modify and to rewrite some parts of the code to tune it better. Using this code we run a series of experiments for a first evaluation of the model. Part of the data obtained by these experiments have been presented in the EWGT 2012 conference \cite{yeldan-colorni-lue-rodaro}, and the complete version is contained in the PhD thesis \cite{yeldan}. We also implement the model using CUDA to take advantage of the power of the modern graphic cards to perform parallel computation. Indeed, the single-lane model can be naturally parallelize simply by giving a cell (a vehicle in our case) to each thread of the GPU. In Section \ref{sec: conclusion} we give a brief description of this process.


\subsection{Setting the kinds of vehicles and the experiment scenarios}
In this section, we describe the setting up conditions for the kinds of vehicles used in a series of simulations. In our experiments, we use just two kinds of vehicles which we call as \textit{passenger vehicles} and \textit{long vehicles}. The vehicles have the following parameters:
\begin{itemize}
  \item \textit{passenger vehicles}: $v_{max}=36$ m/s, $v_{opt}=28$ m/s, the length $4$ m, the natural acceleration noise: $0.2$ m/s$^2$ (see \cite{greenwood}), $s_{max}=500$ m, $s_{min}=-450$ m, the function of the probability of lane-changing to the right lane $P_R(x)=x$, the function of the probability of lane-changing to the left lane $P_L(x)=x$.
   \item \textit{long vehicles}: $v_{max}=25$ m/s, $v_{opt}=20$ m/s, the length $9$ m, the natural acceleration noise: $0.1$ m/s$^2$ (see \cite{greenwood}), $s_{max}=300$ m, $s_{min}=-700$ m, the function of the probability of lane-changing to the right lane $P_R(x)=x$, the function of the probability of lane-changing to the left lane $P_L(x)=x^{1.25}$.
\end{itemize}
\begin{figure}[H]
 \begin{center}
 \includegraphics[scale=0.7]{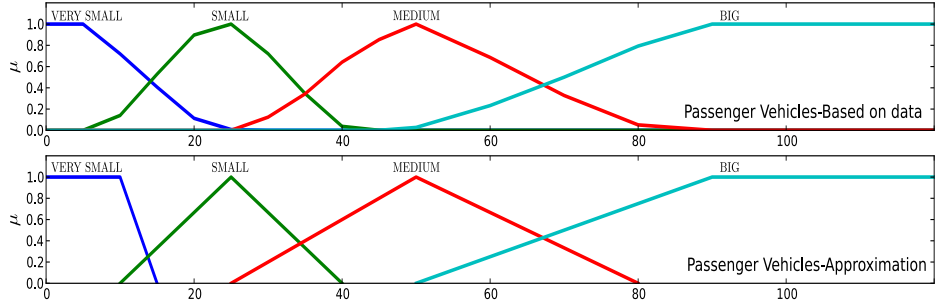}
 \end{center}
  \caption{The membership functions of the passenger vehicles for the front distance}
  \label{MemfuncFD}
\end{figure}
Regarding the fuzzy membership functions, we tune them according to a questionary posed to a group of drivers. The results represent how much the claims are related to each interviewee's perception while driving. We then consider the corresponding membership functions by interpolating linearly these data. However, in the simulations, we do not use these functions based on the data. Instead, we approximate them with the triangular and trapezoidal membership functions which are more common and easier from the computational point of view. We notice from the experiments that this kind of approximation do not alter the results of the simulation significantly. Firstly, because the sample is too small so the data obtained is subject to errors, and secondly, the obtained membership functions are not so different from their approximated versions and the simulation can gain a boost of a factor $1.5$. In Figure \ref{MemfuncFD}, we see one of the membership functions of the passenger vehicles for the perception of front distance adjusted according to the collected data and the approximated version, respectively.

For the first evaluation of our model, we simulate a piece of highway near an off-toll plaza. The parameters used in the simulations are:
\begin{itemize}
  \item Length of the road $L$
  \item Number of lane $M$
  \item Number of iterations (the simulation time)
  \item Number of repetitions of the same experiment
  \item Emission rate $\lambda$: the number of vehicles entering this piece of highway
  \item Influence radius $\rho$ of the off-toll plaza, note that $\rho=-1$ means that there is an open road tolling
  \item Obstacle: An obstacle can be placed on the left-most lane or on the right-most lane
\end{itemize}
We model the entrance of the each lane using a Poisson stochastic process. The probability of emitting at least one vehicle in a lane in the time interval [0,t] is calculated by a Poisson distribution. If the random test succeeds, i.e., if the result is ``emit a vehicle" then we randomly choose a kind of vehicle where the choice is a ``passenger vehicle" with the probability of $1-p$ and a ``long vehicle" with the probability of $p$, and in the experiments we consider $p = 0\%, 10\%, 20\%, 30\%$. The emission rate is set to $\lambda/M$ for each lane where we consider $\lambda = 0.25, 0.5, 1, 1.5, 2$ veh/s. For instance, if $\lambda = 1.5$ and $M=3$, the average number of vehicles entering the piece of highway is $90$ veh/min which means that each lane at the beginning of this road is charged of around $30$ veh/min corresponding to a situation of heavy traffic since the maximum capacity of a lane is considered around $40$ veh/min \cite{kutz}. Regarding the off-toll plaza, we introduce a parameter called influence radius which is the interval of the road segment in which the vehicles are captured when they are exiting. This parameter is used to control the processing time of each vehicle which influences the throughput (the number of vehicles processed in every $10$ seconds). The influence radii considered are $\rho = 10,25,50$ m and ``-1" which means that the off-toll plaza is not visible to the vehicles, thus they simply act as if there is no any off-toll plaza. This situation corresponds to the case of an open road tolling payment system where the vehicles do not need to slow down and stop to make payment. Every $10$ seconds the simulator checks the number of vehicles processed (exited) and it calculates the average latency, i.e., the average of the time spent to travel from the entrance to the exit. We also place an obstacle in some experiments either on the left- or on the right-most lane with a dimension of $2L/5$ in the middle of the piece of highway to analyze a bottleneck phenomenon. In the simulations, each experiment has $1000$ iterations and it is repeated $100$ times. The execution time takes around $6.6$ hours using one core of a computer equipped with a $16$ core Xeon at $2.7$ GHz ($X5550$) with $16$ GB of RAM running Debian Linux (kernel 2.6.32).

\subsection{Analysis of the experimental results}
The macroscopic parameters considered for the purpose of analyzing the general behavior of traffic are evaluated as following. The density describing the number of vehicles per unit length of the piece of highway (measured in vehicles per meter) at time $t$ is calculated by $D(t)=\frac{N(t)}{L}$ where $N(t)$ is the total number of vehicles at time $t$ and $L$ is the length of the road representing a piece of highway ($5$ km long) near the off-toll plaza. The number of vehicles is not constant since the entering rate of the vehicles (depending on a Poisson stochastic process) is different from the exiting one, so we obtain different densities. In our experiments, the density can reach at $0.25$ veh/m/lane as maximum since the length of a passenger vehicle is assumed to be $4$ m long. The average velocity, i.e., the averaged sum of the velocities at time $t$ is calculated by $v_{av}(t)=\sum_{i=1}^{N(t)}v_i/N(t)$, and finally, the flow is evaluated by $q(t)=D(t)\cdot v_{av}(t)$.

The analysis of traffic flow is typically performed by constructing the fundamental diagram (the flow-density diagram) that determines the traffic state of a roadway by showing the relation between flow and density. Therefore, the computations described above are used to plot the fundamental diagrams where we analyze different traffic phenomena, see for instance Figure \ref{flowDiag}.
\begin{figure}
 \begin{center}
 \includegraphics[scale=0.8]{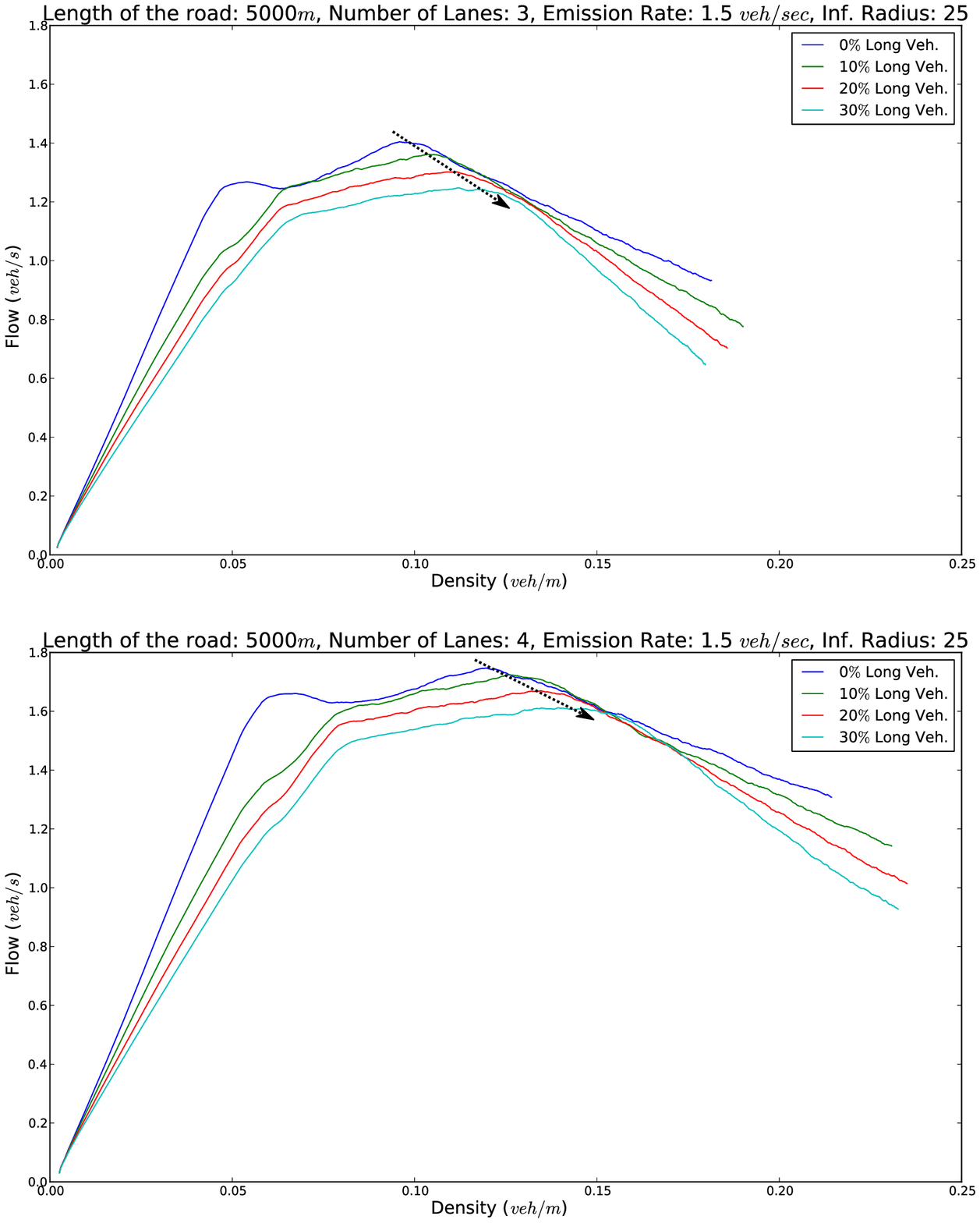}
 \end{center}
  \caption{Comparison of fundamental diagrams with 3 and 4 lanes, respectively}
  \label{flowDiag}
\end{figure}

In the simulation results, we notice that the different composition of vehicles obtained by changing the percentages of the long vehicles is an important factor in influencing the flow. It is seen in Figure \ref{flowDiag} that adding even a small percentage of long vehicles to the traffic stream changes the fundamental diagram significantly. The heterogeneity effects also the throughput as it is predictable since long vehicles are slower and so in the queue near the off-toll plaza it takes more time to move and to get processed.

The experiments show that the model is able to reproduce the typical traffic flow physical phenomena such as the three phases of traffic flow \cite{kerner-rehborn}: Free flow, synchronized flow and wide-moving jam, see Figure \ref{flow-cross-5000-3-2-10}. Free flow corresponds to the region of low to medium density and weak interaction between vehicles. In general, the slope of the fundamental diagram in the free flow phase is related to the speed limit, meaning that in this phase vehicles can move almost at the speed limit. Instead, in the free flow phase of the fundamental diagram in our experimental results, the slope is related to the optimal velocity, meaning that in this phase the vehicles can move almost at the optimal velocity. The reason is that in our model the vehicles do not aim to reach to the maximum velocity as it is in the NaSch-type models, but they tend to go with their optimal velocity. Free flow is characterized with a strong correlation and quasi-linear relation between the local flow and the local density \cite{neubert-santen-schad-screck}. 
\begin{figure}
 \begin{center}
 \includegraphics[scale=0.75]{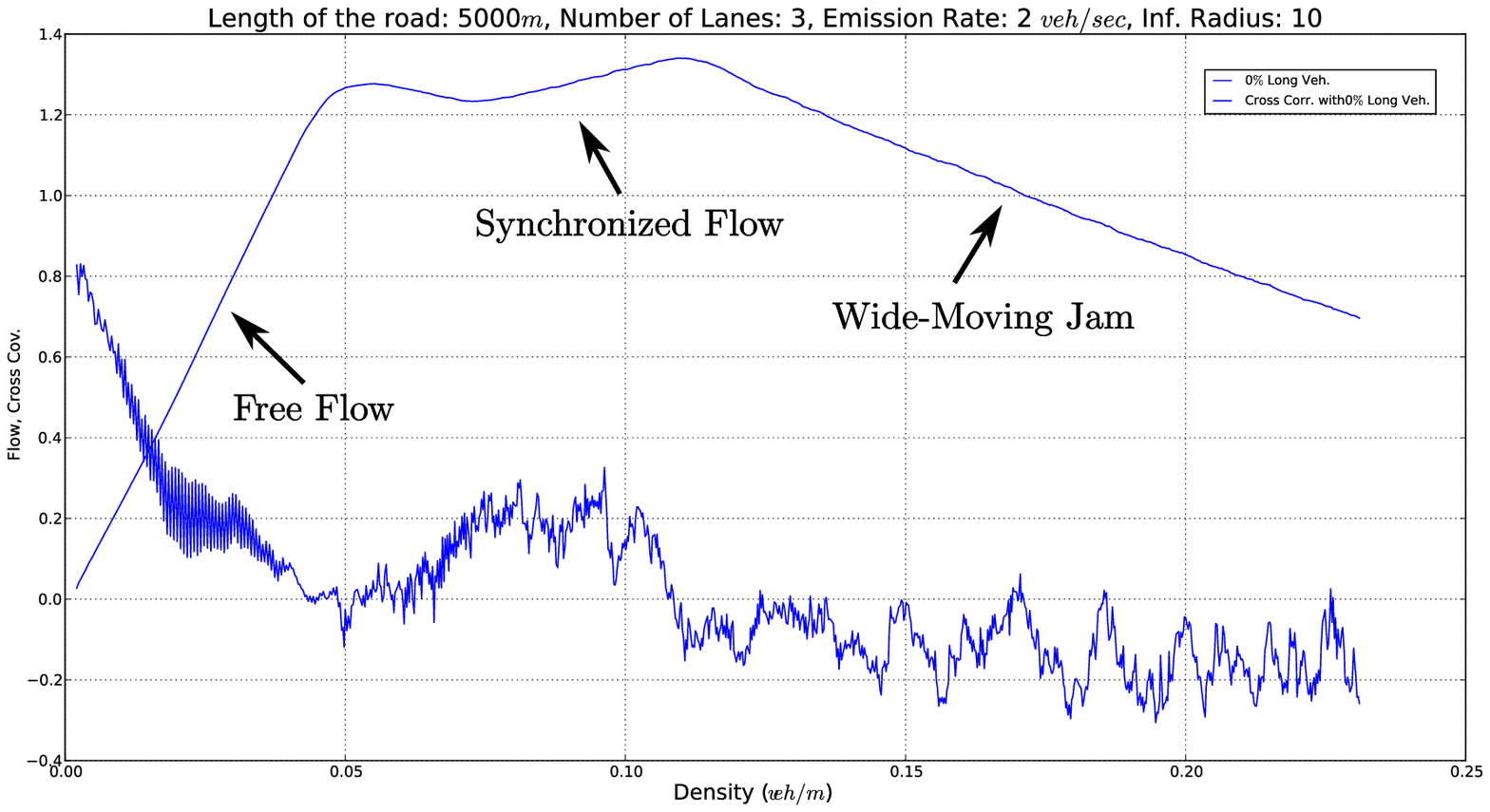}
 \end{center}
  \caption{Traffic phases in the fundamental diagram: Free flow, synchronized flow and wide-moving jam, and the cross-covariance between the flow and density}
  \label{flow-cross-5000-3-2-10}
\end{figure}
The synchronized flow presents medium and high density while the flow can behave free or jammed. In other words, it is defined by the interaction between the vehicles and is characterized by an uncorrelated flow-density diagram. However, this phase is not clearly understood in the context of CA and not observed in most of the NaSch-type models. It probably requires the presence of sources (on-ramps, on-toll plaza) and sinks (off-ramps, off-toll plaza), see \cite{wolf}. The wide-moving jam phase represents the situation where the traffic is jammed (congested). In this phase, an increase in the density results with a decrease in the flow. Let us now consider the cross-covariance between the flow $q(t)$ and the density $D(t)$, see Figure \ref{flow-cross-5000-3-2-10}:
$$
cc(q,D)=\frac{\langle q(t)D(t)\rangle-\langle q(t)\rangle \langle D(t)\rangle}{\sqrt{\langle q(t)^2\rangle-\langle q(t)\rangle^2}\sqrt{\langle D(t)^2\rangle-\langle D(t)\rangle^2}}
$$
where the brackets $\langle \cdot \rangle$ indicate averaging the values obtained in all the experiments at time $t$. In the free flow phase, the flow is strongly related to the density indicating that the average velocity is nearly constant. For large densities, in the wide-moving jam phase, the flow is mainly controlled by density fluctuations. There is a transition between these two phases where the fundamental diagram shows a plateau when the cross-variance is close to zero. This situation with $cc(q,D)\approx 0$ is identified as the synchronized flow \cite{knospe-santen-schad-schrec2,neubert-santen-schad-screck}.

It is seen in Figure \ref{flow-cross-5000-3-2-10} that we reproduce these relations where the free flow phase initiates with the situation of cross-covariance close to $1$ and continues with a positive cross-covariance, the synchronized flow phase is in the region where the cross-covariance is close to zero and the wide-moving jam phase corresponds to the situation where the cross-covariance is negative (anticorrelation).

The plateau formation has the dependency also on the throughput. In Figure \ref{comp-5000-3-1half-0-20-30long}, it is seen this situation where we have compared the fundamental diagrams for different influence radii which are related to the average throughputs. Observing any of these three plots, we see that when the average throughput is increased, there occurs a more immediate transition from the synchronized flow phase to the wide-moving jam phase. In other words, the phase-change between these two flows occurs with a higher density with the decrease of the average throughput. Furthermore, when we compare the three plots in Figure \ref{comp-5000-3-1half-0-20-30long}, we see that with more long vehicle percentage we have the phase-change with a higher density. The same phenomenon is also shown with arrows in Figure \ref{flowDiag}.

\begin{figure}
 \begin{center}
 \includegraphics[scale=1]{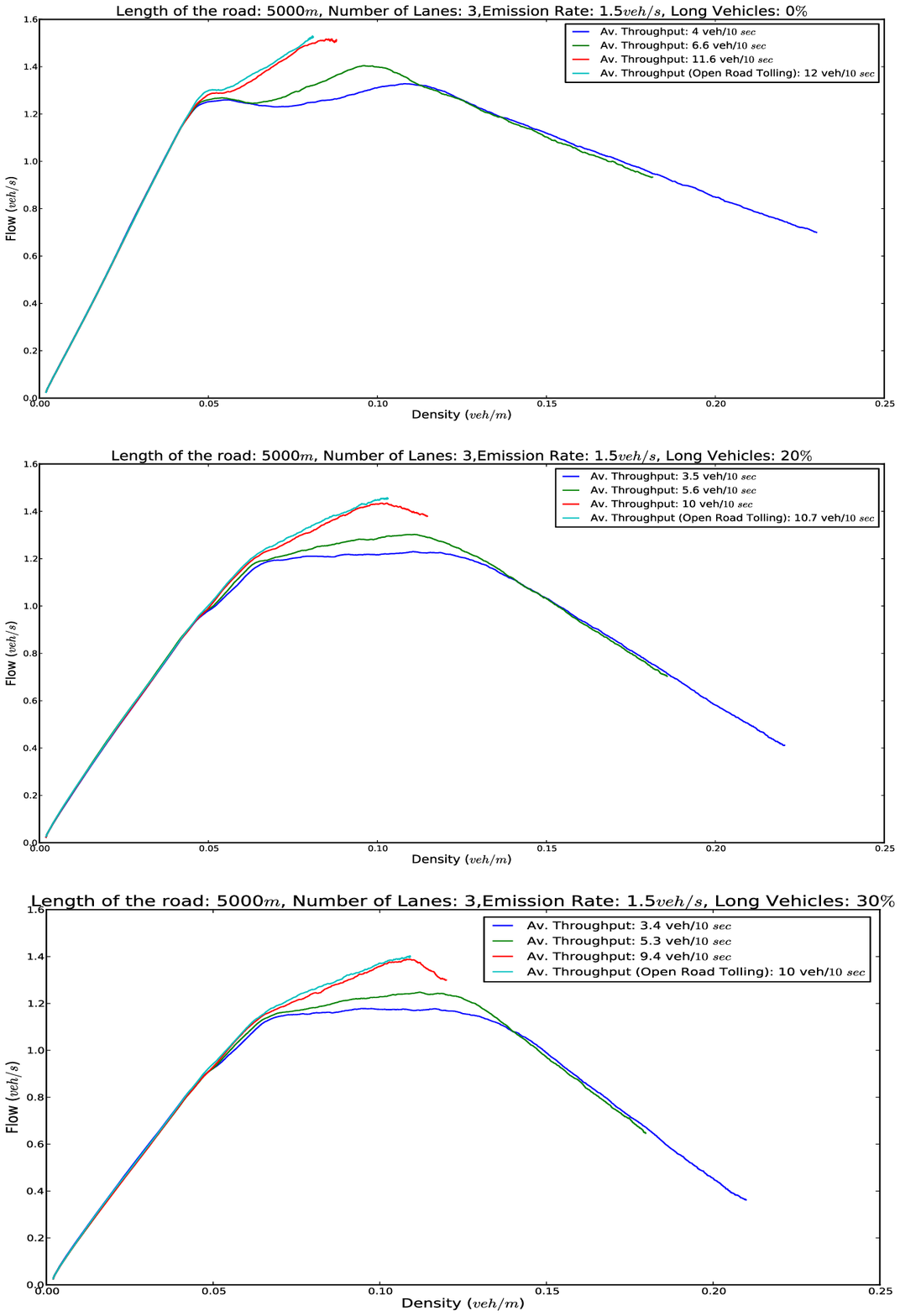}
 \end{center}
  \caption{The fundamental diagrams depending on the different average throughputs}
  \label{comp-5000-3-1half-0-20-30long}
\end{figure}

\begin{figure}
 \begin{center}
 \includegraphics[scale=0.8]{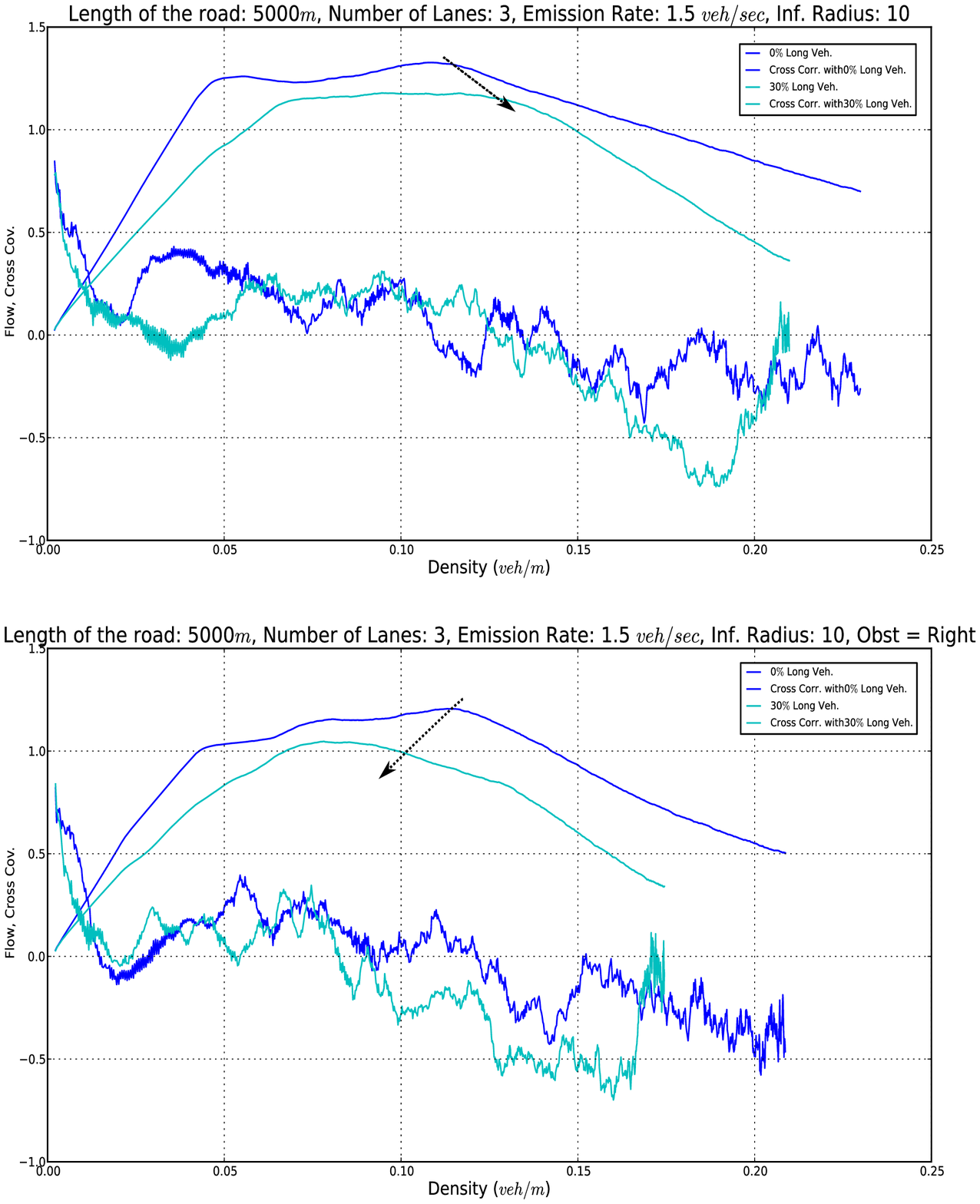}
 \end{center}
  \caption{The fundamental and the cross-covariance diagrams without and with obstacle, respectively}
  \label{obstacle-5000-3-1half-10}
\end{figure}

\begin{figure}
 \begin{center}
 \includegraphics[scale=0.8]{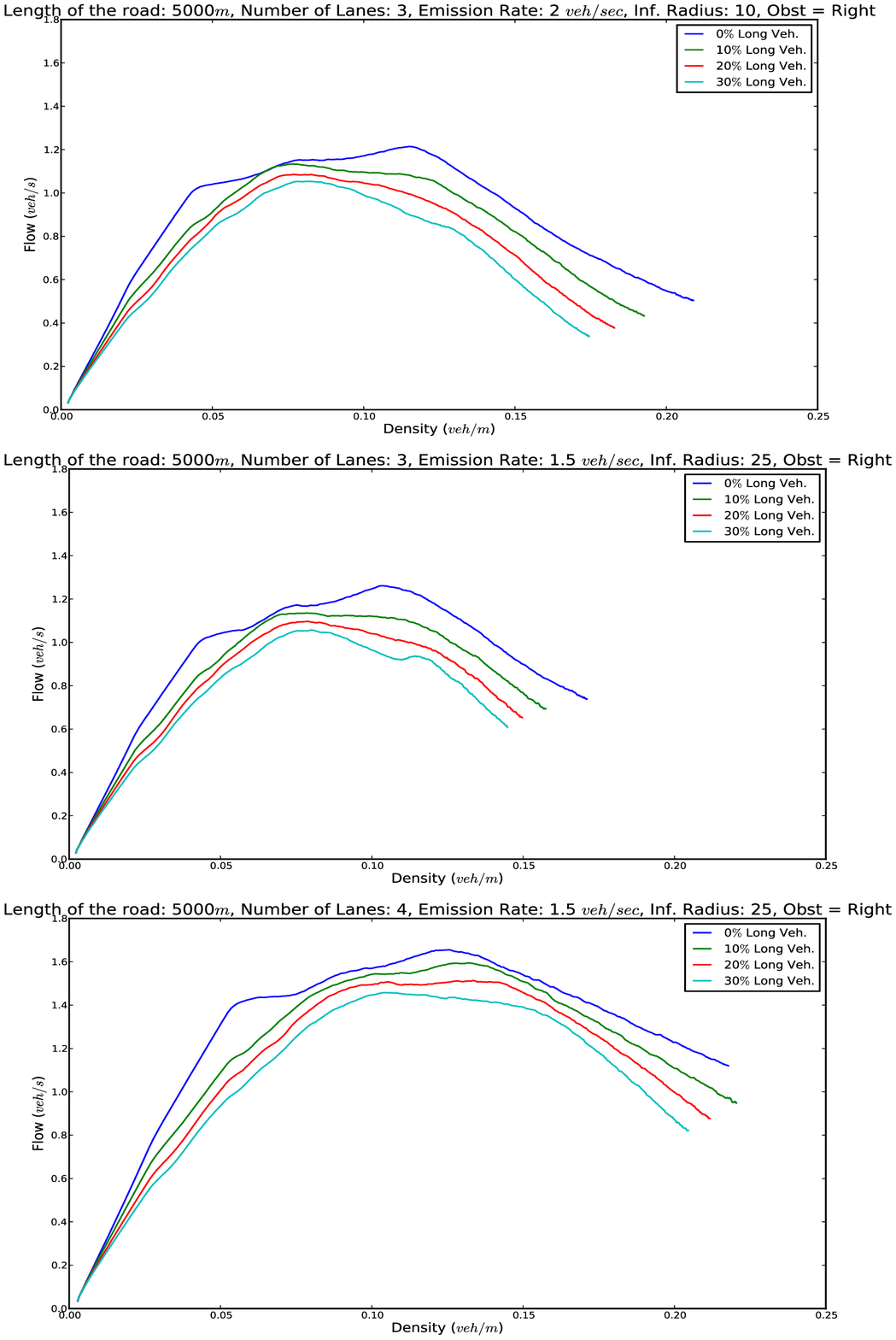}
 \end{center}
  \caption{The slope in the wide-moving jam phase with the obstacles}
  \label{slope-congestion}
\end{figure}

 \begin{figure}
 \begin{center}
 \includegraphics[scale=0.8]{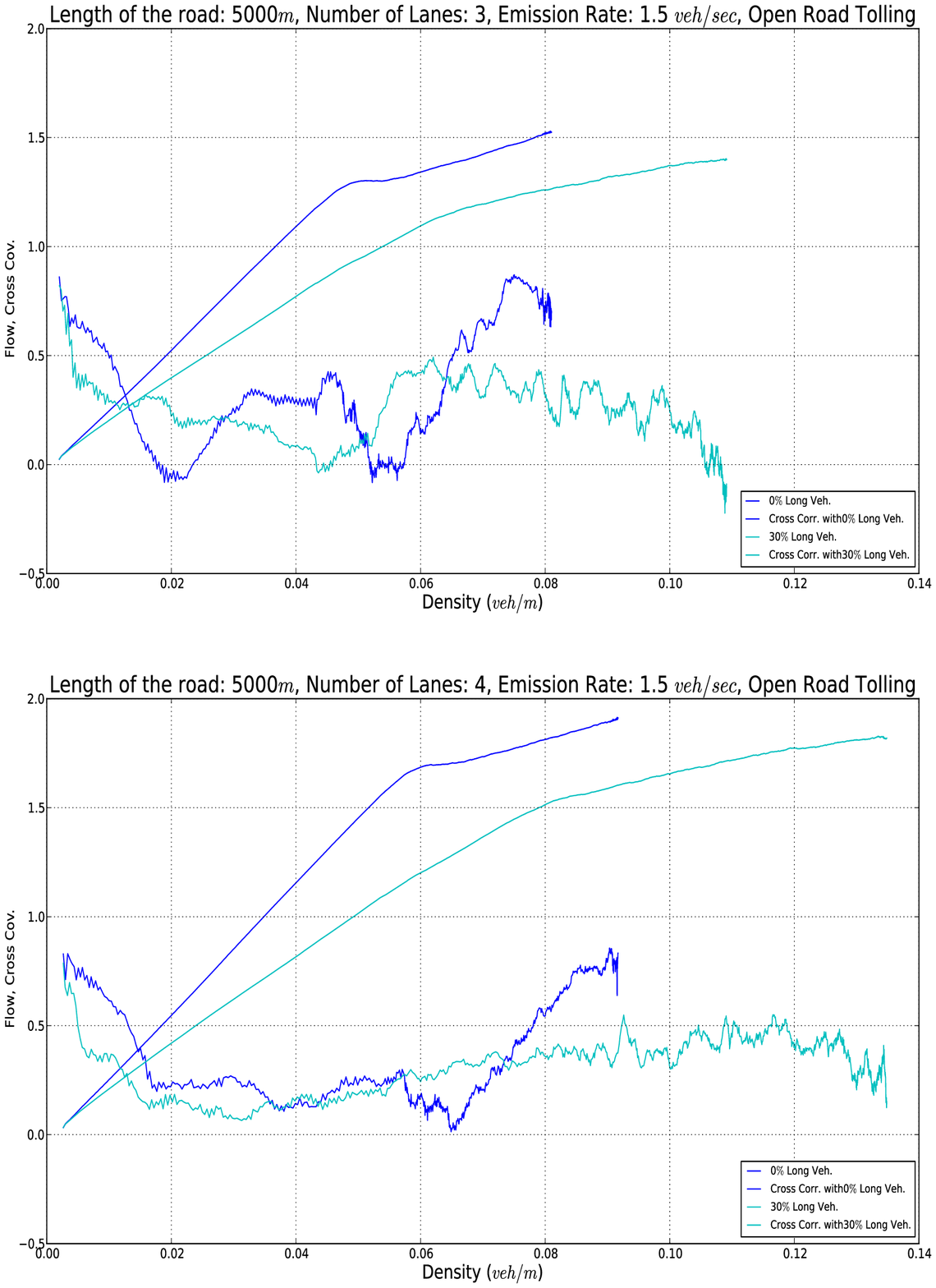}
 \end{center}
  \caption{The effect of open road tolling on the flow phases with 3 lanes and 4 lanes}
  \label{5-3and4-1half-open}
\end{figure}

This phenomenon is inverted in the presence of an obstacle (on the right-most lane\footnote{When we place an obstacle on the left-most lane, we observe that the result is almost the same with placing it on the right-most lane, so when we make comparisons of having or not having obstacle we use the result of placing it on the right-most lane.}). More precisely, when an obstacle is set we observe that the phase-change occurs with a lower density, see Figure \ref{obstacle-5000-3-1half-10}. This phenomenon is more clear with the presence of long vehicles, probably because when long vehicles get stuck in traffic this phase emerges faster. Besides, without an obstacle there is an increase at the slope (absolute value) of the fundamental diagram where there is the wide-moving jam phase, when the percentage of long vehicles is increased (see Figure \ref{flowDiag}). In other words, the flow decreases faster in the presence of long vehicles with the same increment of density, which is not observed in the case of placing an obstacle (the lines are almost parallel, see Figure \ref{obstacle-5000-3-1half-10}). However, this parallelism occurs only when the road is enough saturated. For instance in Figure \ref{slope-congestion} we see that the parallelism occurs in the first and second plots, but not in the third plot since the lane number is increased and so the emission rate per lane is decreased, i.e., there is not enough saturation.

Another effect of setting an obstacle is a reduction on the traffic capacity as it is expected. For instance, in the experiments plotted in Figure \ref{obstacle-5000-3-1half-10}, the maximum flow that can be reached is around $1.33$ veh/sec without an obstacle and around $1.2$ veh/sec with an obstacle placed on the right-most lane. In this figure, we also see that this reduction is more evident with the presence of long vehicles.

In the case that there is an open road tolling, the wide-moving jam phase does not take place, so there is no anticorrelation situation. As it is seen in Figure \ref{5-3and4-1half-open}, there is just a transition between the free flow and the synchronized flow. The absence of wide-moving jam phase occurs in general with the situations where the emission rate is low with respect to the rate that the vehicles are processed (the influence radius or throughput) at the off-toll plaza as it is expected.

The experiments also show that the model is able to reproduce the hysteresis phenomena in transition between free flow and synchronized flow phases (see \cite{kerner-rehborn}) and in transition between synchronized flow and wide-moving jam phases. The regions where there are saddles followed by a capacity drop, as in the Fig.9(e) of \cite{geroliminis-sun}, are the regions of metastability in the phase-changes (from free flow to synchronized flow and from synchronized flow to wide-moving jam). This metastability phenomenon is not so evident when there are long vehicles. More precisely, the heterogeneity of traffic effects the formation of the plateaus. In Figure \ref{flowDiag} and \ref{comp-5000-3-1half-0-20-30long}, we see that in the case of the presence of long vehicles, the bumpy plateaus in the synchronized flow phase are replaced by more flattened plateaus, so the saddles in the phase-changes are not observed as they are in the absence of long vehicles. This is probably due to the fact that passenger vehicles are faster and so the flow of a traffic without long vehicles changes its phase more sharply. However, we observe that the presence of obstacles increases the number of the saddles even when there are long vehicles, making the fundamental diagrams more bumpy especially in the synchronized flow phase (see Figure \ref{obstacle-5000-3-1half-10} and \ref{slope-congestion}).

\section{Conclusion}\label{sec: conclusion}

It is introduced a new model for multi-lane traffic. This model is an attempt to define a new class of traffic models which is a hybrid between the usual microscopic traffic models, like car-following models, and the usual CAMs which we have identified as the NaSch-type models. In the process of the extension of the single-lane model to the multi-lane case, we have first presented the model as an array of communicating one-dimensional CCA, and then we have proved that this model can be simulated by a suitable CCA. In this way, we have framed our multi-lane model inside the class of CCA.

For a first test, we have implemented the model using Python\footnote{The code can be found in the public Dropbox link \myurl.} with an object-oriented philosophy of programming. Using a questionnaire we have set up two kinds of vehicles which we have used to run a series of experiments. Analyzing the experimental results, we have studied the influence of different composition of vehicles on the macroscopic behavior of the traffic in order to observe the typical traffic flow phenomena. We have found that the results are promising, indeed we reproduce and enrich the fundamental diagrams of traffic flow.

\begin{table}[h]
\caption{Computation Time Comparison between CPU and GPU}
\begin{center}
\scalebox{0.75}{
    \begin{tabular}{ | c | c | c | }
    \hline
    Number of Vehicles & CPU & GPU \\
     (per lane)  & (sec) & (sec)  \\ \hline
       0;    0;  1500  &  484 &   45,8 \\ \hline
       0;    0;  3000  &  958 &  104   \\ \hline
       0;    0;  5000  & 1608 &  194   \\ \hline
       0;    0; 10000  & 4270 &  556   \\ \hline
    5000; 5000;  5000  & 9679 & 1288   \\ \hline
    \end{tabular}
}
\end{center}
\label{tab:comparisonCUDAPython}
\end{table}

The code written in Python does not take advantage of CA and its typical synchronous behavior. For this reason, we have adapted the code using PyCuda\footnote{The code can be found in the public Dropbox link \myurlcuda.} to partially parallelize Algorithm \ref{alg:evolve multi} on GPU's and we have seen that it is possible to boost the speed of execution by a factor of $\sim 10$. Indeed, in Table \ref{tab:comparisonCUDAPython} we see the computation time comparison between the two codes made by running $1000$ steps of simulation for a road with $3$ lanes and where it is given the initial numbers of the vehicles distributed per lane (on a laptop equipped with a processor $i7$ intel and with a graphic card NVIDIA GeForce GT $555$M). The increased speed of execution by higher factors of simulation is important also to adapt the model for a forecasting usage. By partial parallelization, we mean that the lane-changing process performed using the operations $\rightarrowtail_L, \rightarrowtail_R, \setminus$ is done sequentially, and the only parts that can be run in parallel are the global transition functions $\delta_{(a,b)}^*,\;a,b\in \{0,1\}$. Therefore Algorithm \ref{alg:evolve multi} can be completely parallelized only if we decide not to apply the precedence rules and assume a concurrent strategy of lane-changing.

The analysis we perform here is not conclusive, but gives an insight of the potentiality of our model. For this reason, we suggest the following tasks as future works and research directions to improve and validate the model and the simulator:

\begin{itemize}
  \item We did not use all the potentiality of the code since our aim was to give a first evaluation to our model. However, it would be interesting to consider also the experiments involving on- and off-ramps and loop-detectors to analyze different and more realistic situations.
  \item The heterogeneity considered is reduced to two kinds of vehicles. A natural question is how the system reacts introducing other kinds. For instance, in highway environments, motorcycles or sport vehicles can be added to the mixed traffic of passenger vehicles and long vehicles. This also brings with it the interesting issue of how to tune the membership functions for such new kinds.
  \item The process of lane-changing is purely stochastic. However, in literature there are some attempts in microscopic models where the process of lane-changing is described by using a fuzzy logic-based system \cite{brackstone-mcdonald,errampalli-okushima-akiyama}, thus it would be interesting to extend our model to a model in which it is implemented a fuzzy logic-based system to refine the lane-changing rules.
  \item The model has to be compared with real data. In other words, a careful case study on specific scenarios with the data available is necessary for the validation by the community of people working on traffic flow theory, granular flow theory and traffic (transportation) engineering.
\end{itemize}

\section*{Acknowledgments}
The author acknowledges the support from the European Regional Development Fund through the programme COMPETE and by the Portuguese Government through the FCT -- Funda\c c\~ao para a Ci\^encia e a Tecnologia under the project PEst-C/MAT/UI0144/2011. The second author also acknowledges the support of the FCT project SFRH/BPD/65428/2009.

\bibliographystyle{plain}
\bibliography{citations}

\end{document}